\definecolor{kugray5}{RGB}{224,224,224}
\newcommand{\keywords}[1]{\par\addvspace\baselineskip
\noindent\keywordname\enspace\ignorespaces#1}
\begin{document}

\mainmatter  

\title{Exponential-Condition-Based Barrier Certificate Generation for Safety Verification of Hybrid Systems\thanks{This work was supported by the Chinese National 973 Plan under grant No.~2010CB328003, the NSF of China under grants No.~61272001, 60903030, 91218302, the Chinese National Key Technology R\&D Program under grant No.~SQ2012BAJY4052, and the Tsinghua University Initiative Scientific Research Program.}}

\titlerunning{Exponential Condition Barrier Certificate Safety Verification}

%
%
\author{Hui Kong\inst{1,5,6} \and Fei He\inst{2,5,6} \and Xiaoyu Song\inst{3} \and William N. N. Hung\inst{4} \and Ming Gu\inst{2,5,6}}
%

\institute{
Dept. of Computer Science\&Technology, Tsinghua University, Beijing, China
\and
School of Software, Tsinghua University, Beijing, China
\and
Dept. of ECE, Portland State University, Oregon, USA
\and
Synopsys Inc, Mountain View, California, USA
\and
Tsinghua National Laboratory for Information Science and Technology
\and
Key Laboratory for Information System Security, MOE, China
}

%
\maketitle

\begin{abstract}
A barrier certificate is an inductive invariant function which can be used for the safety verification of a hybrid system. Safety verification based on barrier certificate has the benefit of avoiding explicit computation of the exact reachable set which is usually intractable for nonlinear hybrid systems. In this paper, we propose a new barrier certificate condition, called \emph{Exponential Condition}, for the safety verification of semi-algebraic hybrid systems. The most important benefit of \emph{Exponential Condition} is that it has a lower conservativeness than the existing convex condition and meanwhile it possesses the property of convexity. On the one hand, a less conservative barrier certificate forms a tighter over-approximation for the reachable set and hence is able to verify critical safety properties. On the other hand, the property of convexity guarantees its solvability by semidefinite programming method. Some examples are presented to illustrate the effectiveness and practicality of our method.
\keywords{inductive invariant, barrier certificate, safety verification, hybrid system, nonlinear system, sum of squares}
\end{abstract}

\section{Introduction}
Hybrid systems~\cite{henzinger1996theory}, \cite{alur1995algorithmic} are models for those systems with interacting discrete and continuous dynamics. Embedded systems are often modeled as hybrid systems due to their involvement of both digital control software and analog plants. In recent years, as embedded systems are becoming ubiquitous, more and more researchers are devoted to the theory of hybrid systems. Reachability problems or safety verification problems are among the most challenging problems in verifying hybrid systems. The aim of safety verification is to decide that starting from an initial set, whether a continuous system or hybrid system can reach an unsafe set. For this purpose, many methods have been proposed for various hybrid systems with different features.


Deductive methods based on inductive invariant play an important role in safety verification of hybrid systems. An inductive invariant of a hybrid system is an invariant $\varphi$ that holds at the initial states of the system, and is preserved by all discrete and continuous transitions. A safety property is an invariant $\psi$ (usually not inductive) that holds in all reachable states of the system. The standard technique for proving a given property $\psi$ is to generate an inductive invariant $\varphi$ that implies $\psi$. Therefore, the problem of safety verification is converted to the problem of inductive invariant generation and hence avoid the reachability computation of the hybrid system. The key points in generating inductive invariant for hybrid systems is how to define an inductive condition that is the least conservative and how to efficiently compute the inductive invariant that satisfies the inductive condition. Usually, these two aspects contradicts with each other, that is, an inductive condition with sufficiently low conservativeness often encounters the computability or complexity problem. For different class of hybrid systems, various inductive invariants and computational methods have been proposed.

Some methods were primarily proposed for constructing inductive invariant for linear hybrid systems \cite{jirstrand1998invariant}, \cite{rodriguez2005generating}. In recent years, however, researchers concentrate more and more on nonlinear hybrid systems, especially on algebraic or semi-algebraic hybrid systems (i.e. those systems whose vector fields are polynomials and whose set descriptions are polynomial equalities or inequalities), as they have a higher universality. In \cite{sankaranarayanan2004constructing}, \cite{sankaranarayanan2010automatic}, Sankaranarayanan et al. presented a computational method based on the theory of ideal over polynomial ring and quantifier elimination for automatically generating algebraic invariants for algebraic hybrid systems. Similarly, Tiwari et al. proposed in \cite{tiwari2004nonlinear} a technique based on the theory of ideal over polynomial ring to generate the inductive invariant for nonlinear polynomial systems. In \cite{prajna2007framework}, \cite{prajna2004safety}, S. Prajna et al. proposed a new inductive invariant called \emph{Barrier Certificate} for verifying the safety of semialgebraic hybrid systems and the computational method they applied is the technique of sum-of-squares decomposition of semidefinite polynomials. In \cite{sloth2012compositional}, C. Sloth et al. proposed a new \emph{Barrier Certificate} for a special class of hybrid systems which can be modeled as an interconnection of subsystems. In \cite{platzer2008computing}, A. Platzer et al. proposed the concept of \emph{Differential Invariant} which is a boolean combination of multiple polynomial inequalities for verifying semialgebraic hybrid systems. In \cite{gulwani2008constraint}, S. Gulwani et al. proposed an inductive invariant similar to \emph{Differential Invariant} except that they defined a different inductive condition and they used SMT solver to solve the inductive invariant. In \cite{taly2009deductive}, A. Taly et al. discussed the soundness and completeness of several existing invariant condition and presented several simpler and practical invariant condition that are sound and relatively complete for different classes of inductive invariants. In \cite{taly2011synthesizing}, A. Taly et al. proposed to use inductive controlled invariant to synthesize multi-modal continuous dynamical systems satisfying a specified safety property.

In this paper, we propose a new barrier certificate (called \emph{Exponential Condition}) for the safety verification of semialgebraic hybrid systems. A barrier certificate is a special class of inductive invariant for the safety verification of hybrid systems: a function $\varphi(x)$ which maps all the states in the reachable set to non-positive reals and all the states in the unsafe set to positive reals. Given a dynamical system $S$ with dynamics $\dot{x}=f(x)$ with initial set $Init$, to prove a safety property $P$ (we use $X_u$ to denote the unsafe set) is satisfied by $S$, the basic idea of \emph{Exponential Condition} is to identify a function $\varphi(x)$ such that 1) $\varphi(x)\leq 0$ for any point $x \in Init$, 2) $\varphi(x)>0$ for any point $x \in X_u$, and 3) $\mathcal{L}_f\varphi(x) \leq \lambda\varphi(x)$, where $\mathcal{L}_f\varphi(x)=\frac{\partial{\varphi}}{\partial{x}} f(x)$ is the Lie derivative of $\varphi$ with respect to the vector field $f$ and $\lambda$ is any negative constant real value. The first condition and the third condition together guarantee that $\varphi(x) \leq 0$ for any point $x$ in the reachable set $R$, which implies that $R \cap X_u=\emptyset$. Therefore, we can assert that the safety property $P$ is satisfied by the system $M$ as long as we can find a function $\varphi(x)$ satisfying the above condition. The above condition can be extended to semialgebraic hybrid systems naturally. The idea is to identify a set of functions $\{\varphi_i(x)\}$, one for each mode of the hybrid system, which not only satisfy the above condition but also satisfy an additional sign-preserving constraint for each discrete transition.

The most important benefit of \emph{Exponential Condition} is that it is less conservative than \emph{Convex Condition} \cite{prajna2007framework} and \emph{Differential Invariant} \cite{platzer2008computing}, where the Lie derivative of $\varphi(x)$ is required to satisfy that $\mathcal{L}_f\varphi(x)\leq 0$ (a stronger condition than $\mathcal{L}_f\varphi(x) \leq \lambda\varphi(x)$), and meanwhile, it possesses the property of convexity as well. On the one hand, a less conservative inductive invariant forms a tighter over-approximation for the reachable set and hence is able to verify critical safety properties (i.e., the unsafe region is very close to reachable region). On the other hand, a convex inductive invariant condition can be solved efficiently by semidefinite programming method, which is widely used for computing Lyapunov functions in the stability analysis of nonlinear systems. In fact, there exist some other less conservative inductive invariants than \emph{Exponential Condition}, such as \cite{prajna2007framework}, \cite{gulwani2008constraint}, \cite{taly2009deductive}, however, these inductive conditions are not convex and thus cannot be solved by semidefinite programming method. Instead, they are usually solved by quantifier elimination and SMT solver, which usually has a much higher computational complexity than semidefinite programming method.

Given a semialgebraic hybrid system, we choose a set of polynomials of bounded degree with unknown coefficients as the candidate inductive invariant, and then we obtain a set of positive semidefinite polynomials (i.e. $P(x)\geq 0$) according to \emph{Exponential Condition}. Therefore, the generation of barrier certificate based on \emph{Exponential Condition} can be transformed to the problem of sum-of-squares programming of positive semidefinite polynomials~\cite{sturm1999using}, \cite{prajna2005sostools}. Based on our theory, we develop an algorithm for generating the inductive invariant satisfying \emph{Exponential Condition}. Experiments on both nonlinear systems and hybrid systems show the effectiveness and practicality of our method.

The remainder of this paper is organized as follows. Section~\ref{sec:Preliminaries} introduces the preliminaries of our method. Section~\ref{sec:CertCondition} presents the barrier certificate conditions for continuous systems and hybrid systems. Section~\ref{sec:computemethod} introduces the computational method we use to construct barrier certificates according to the barrier certificate conditions. Section~\ref{sec:examples} gives some examples to demonstrate the application of our method to the safety verification of continuous and hybrid systems. Finally, we conclude our work in Section~\ref{sec:conclusion}.

\section{Preliminaries}\label{sec:Preliminaries}
In this paper, we adopt the model proposed in~\cite{carloni2006languages} as our modeling framework. Many other models for hybrid system can be found in~\cite{maler1992prom}, \cite{lygeros1999controllers}, \cite{alur1995algorithmic}.

A continuous system is specified by a differential equation
\begin{equation}\label{sec:ContinuousSys}
  \dot{x} = f(x)
\end{equation}
where $x \in \mathbb{R}^n$ and $f$ is a Lipschitz continuous vector function from $\mathbb{R}^n$ to $\mathbb{R}^n$. Note that the Lipschitz continuity guarantees the existence and uniqueness of the solution $x(t)$ to the system~\eqref{sec:ContinuousSys}. A hybrid system can then be defined as:
\begin{definition}
\textbf{(Hybrid System)} A hybrid system is a tuple $\mathcal{H} = \langle L,X,E,R,G,$ $I,F\rangle$, where
\begin{itemize}
\item $L$ is a finite set of locations (or modes);
\item $X \subseteq \mathbb{R}^n$ is the continuous state space. The hybrid state space of the system is denoted by $\mathcal{X} = L \times X$  and a state is denoted by $(l,x) \in \mathcal{X}$;
\item $E \subseteq L \times L$ is a set of discrete transitions;
\item $G:E \mapsto 2^X$ is a guard mapping over discrete transitions;
\item $R:E \times X \mapsto 2^X$ is a reset mapping over discrete transitions;
\item $I : L \mapsto 2^X$ is an invariant mapping;
\item $F : L \mapsto (X\mapsto X)$ is a vector field mapping which assigns to each location $l$ a vector field $f$.
\end{itemize}
\end{definition}

The transition and dynamic structure of the hybrid system defines a set of trajectories. A trajectory is a sequence starting from a state $(l_0,x_0) \in \mathcal{X}_0$, where $\mathcal{X}_0 \subseteq \mathcal{X}$ is an initial set, and consisting of a series of interleaved continuous flows and discrete transitions. During the continuous flows, the system evolves following the vector field $F(l)$ at some location $l\in L$ until the invariant condition $I(l)$ is violated. At some state $(l,x)$, if there is a discrete transition $(l,l')\in E$ such that $(l,x) \in G(l,l')$ (we write $G(l,l')$ for $G((l,l'))$), then the discrete transition can be taken and the system state can be reset to $R(l,l',x)$. The problem of safety verification of a hybrid system is to prove that the hybrid system cannot reach an unsafe set $\mathcal{X}_u$ from an initial set $\mathcal{X}_0$.

An important concept used in this paper is the Lie derivative. In our context, the Lie derivative evaluates the change of a scalar function $\varphi(x)$ along the flow of a vector field $f(x)=(f_1(x),\cdots,f_n(x))$. Formally,
\begin{equation*}
  \mathcal{L}_f \varphi \triangleq \frac{\partial \varphi}{\partial x} f(x) = \sum_{i=1}^n \frac{\partial \varphi}{\partial x_i} f_i(x)
\end{equation*}

Some other notations that are used in this paper are presented here. $\mathbb{R}$ denotes the real number field. $\mathcal{C}^1(\mathbb{R}^n)$ denotes the space of 1-time continuously differentiable functions mapping $X \subseteq \mathbb{R}^n$ to $\mathbb{R}$. $\mathbb{R}[x]$ denotes the polynomial ring in $x$ over the real number field and $\mathbb{R}[x]^m$ denotes the $m$-dimensional polynomial vector space over $\mathbb{R}[x]$. $M^T$ denotes the transpose of the matrix $M$.

\section{Conditions for Constructing Barrier Certificates}\label{sec:CertCondition}
\subsection{Barrier Certificate Condition for Continuous Systems}\label{contSys}

Given a continuous system $S$, an initial set $X_0$ and an unsafe set $X_u$, a barrier certificate is a real-valued function $\varphi(x)$ of states  satisfying that $\varphi(x)\leq 0$ for any point $x$ in the reachable set $R$ and $\varphi(x)>0$ for any point $x$ in the unsafe set $X_u$ (called \emph{General Constraint} hereafter). Therefore, if there exists such a function $\varphi(x)$, we can assert that $R \cap X_u = \emptyset$, that is, the system can not reach a state in the unsafe set from the initial set. However, the exact reachable set $R$ is not computable for most hybrid systems, we cannot decide directly whether $\varphi(x)\leq 0$ holds for all the points in $R$. Therefore, various alternative inductive conditions that are equivalent to or sufficient for \emph{General Constraint} are proposed. In what follows, we present a new barrier certificate which is a sufficient condition for \emph{General Constraint}.

Consider a continuous system $\mathbb{C}$ specified by the differential equation \eqref{sec:ContinuousSys}, we assume that $X_0 (\subseteq X)$, $X_u$ are the initial set and the unsafe set respectively. Then, we have the following theorem as a barrier certificate condition.

\begin{theorem}[Exponential Condition]\label{theorem1}
  Given the continuous system \eqref{sec:ContinuousSys} and the corresponding sets $X$, $X_0$ and $X_u$, for any given $\lambda \in \mathbb{R}$, if there exists a barrier certificate, i.e, a real-valued function $\varphi(x)\in \mathcal{C}^1(\mathbb{R}^n)$ satisfying the following formulae:
\begin{align}
 &\forall x \in X_0: \varphi(x) \leq 0\label{cond11}\\
 &\forall x \in X: \mathcal{L}_f\varphi(x) - \lambda \varphi(x) \leq 0\label{cond12}\\
 &\forall x \in X_u: \varphi(x) > 0\label{cond13}
\end{align}
then the safety property is satisfied by the system \eqref{sec:ContinuousSys}.
\end{theorem}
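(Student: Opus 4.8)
The plan is to reduce the claim to the \emph{General Constraint}: I will show that conditions \eqref{cond11} and \eqref{cond12} force $\varphi$ to remain non-positive along every trajectory emanating from $X_0$, so that $\varphi(x)\leq 0$ holds on the entire reachable set $R$; condition \eqref{cond13} then gives $R \cap X_u = \emptyset$, which is exactly the desired safety property.

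First I would fix an arbitrary trajectory $x(t)$ of \eqref{sec:ContinuousSys} with $x(0)=x_0\in X_0$; the Lipschitz continuity of $f$ assumed in the preliminaries guarantees that such a trajectory exists and is unique. Setting $g(t)\triangleq\varphi(x(t))$ and differentiating via the chain rule, I obtain
\begin{equation*}
\dot{g}(t)=\frac{\partial\varphi}{\partial x}\,\dot{x}(t)=\frac{\partial\varphi}{\partial x}\,f(x(t))=\mathcal{L}_f\varphi(x(t)).
\end{equation*}
Since the trajectory stays inside $X$, condition \eqref{cond12} yields the scalar differential inequality $\dot{g}(t)\leq\lambda\,g(t)$.

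The heart of the argument is to integrate this inequality. I would introduce the integrating factor $e^{-\lambda t}$ and set $h(t)\triangleq e^{-\lambda t}g(t)$, so that $\dot{h}(t)=e^{-\lambda t}\bigl(\dot{g}(t)-\lambda g(t)\bigr)\leq 0$. Hence $h$ is non-increasing, giving $h(t)\leq h(0)=g(0)=\varphi(x_0)\leq 0$ by \eqref{cond11}. Because $e^{-\lambda t}>0$ for every $t$ and every real $\lambda$, this forces $g(t)=\varphi(x(t))\leq 0$ for all $t\geq 0$. The sign of $\lambda$ is immaterial here: the positivity of the exponential integrating factor is precisely what makes the conclusion hold for \emph{any} $\lambda\in\mathbb{R}$, which is the point where \emph{Exponential Condition} genuinely relaxes the requirement $\mathcal{L}_f\varphi\leq 0$ of the convex condition.

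Finally, since $x_0\in X_0$ and the trajectory were arbitrary, every reachable state $x$ satisfies $\varphi(x)\leq 0$, i.e. $\varphi\leq 0$ on $R$. Combined with \eqref{cond13}, which gives $\varphi>0$ on $X_u$, no point can lie simultaneously in $R$ and $X_u$; therefore $R\cap X_u=\emptyset$ and the system is safe. I expect the main obstacle to be the rigorous justification of the integration step — formally a Gr\"onwall-type comparison argument — rather than any of the algebraic manipulations, and in particular confirming that this step remains valid uniformly in the sign of $\lambda$.
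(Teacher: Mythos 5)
Your proposal is correct and takes essentially the same route as the paper: both integrate the differential inequality $\frac{d}{dt}\varphi(x(t))\leq\lambda\varphi(x(t))$ via the factor $e^{-\lambda t}$ — the paper writes the explicit variation-of-constants solution with the non-positive forcing term $g=\mathcal{L}_f\varphi-\lambda\varphi$, while you observe that $h(t)=e^{-\lambda t}\varphi(x(t))$ is non-increasing, which is the same computation. Your closing worry about a Gr\"onwall-type justification is unnecessary: since $\varphi\in\mathcal{C}^1$ and $x(t)$ is differentiable, $\dot{h}\leq 0$ already yields monotonicity of $h$ by the mean value theorem, uniformly in the sign of $\lambda$.
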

\begin{proof}
Suppose $x_0 \in X_0$ and $x(t)$ be the corresponding particular solution of the system~\eqref{sec:ContinuousSys}. We aim to prove that for any function $\varphi(x(t))$ satisfying the formulae~\eqref{cond11}--\eqref{cond13}, the following formula holds:
\begin{equation}
  \forall \zeta \geq 0: \varphi(x(\zeta)) \leq 0.
\end{equation}
Let $g(x) = \mathcal{L}_f\varphi(x) - \lambda\varphi(x)$, then by \eqref{cond12}
\begin{equation}
  \forall x \in X: g(x) \leq 0\label{gx}
\end{equation}
Since $\frac{d\varphi(x(t))}{dt} = \frac{\partial{\varphi}}{\partial{x}} \frac{dx}{dt}=\frac{\partial{\varphi}}{\partial{x}} f(x)=\mathcal{L}_f\varphi(x)$, we have the differential equation about $\varphi(x(t))$
\begin{equation}\label{eqcomb}
\left\{\begin{array}{ll}
  \frac{d\varphi(x(t))}{dt} - \lambda \varphi(x(t)) - g(x(t)) = 0\\
  \varphi(x(0)) = \varphi(x_0)\\
    \end{array}\right.
\end{equation}
By solving the differential equation~\eqref{eqcomb}, we have following the solution:
\begin{equation}\label{odesolution}
  \varphi(x(t)) = (\int_{0}^{t}{(g(x(\tau)) e^{-\lambda \tau}d\tau}+\varphi(x_0)) e^{\lambda t}.
\end{equation}
By \eqref{gx}, we have
\begin{equation}
\int_{0}^{t}{(g(x(\tau)) e^{-\lambda \tau}d\tau} \leq 0.\label{intgx}
\end{equation}
then by \eqref{intgx} and $\varphi(x_0)\leq 0$, we finally have
\begin{equation}
  \varphi(x(t)) \leq \varphi(x_0) e^{\lambda t} \leq 0.\label{bxtineq}
\end{equation}
Hence, for any $\zeta\geq 0$, $\varphi(x(\zeta)) \leq 0$ holds.
\qed
\end{proof}

 \begin{remark}
  The formulae \eqref{cond11} and \eqref{cond13} ensure that the barrier separates the initial set $X_0$ from the unsafe set $X_u$, and the formula \eqref{cond12} ensures that system trajectories cannot escape from inside of the barrier. These conditions together imply that $\varphi(x) \leq 0$ is an inductive invariant of the system~\eqref{sec:ContinuousSys}.
  \end{remark}

  From another point of view, the semi-algebraic set $\{x\in \mathbb{R}^n| \varphi(x)\leq 0\}$ forms an over-approximation for the reachable set of the system \eqref{sec:ContinuousSys}, and the zero level set of the function $\varphi(x)$ (i.e., $\{x\in \mathbb{R}^n|\varphi(x)=0\}$) forms the boundary of the over-approximation. In order to be less conservative, we hope the boundary of the over-approximation encloses the reachable set $\{x(t)| x(0) \in X_0, \dot{x}= f(x), t \in \mathbb{R}_+\}$ as tightly as possible, in other words, to make the upper-bound of $\varphi(x(t))$ approach zero as closely as possible. According to the above proof (i.e., \eqref{bxtineq}), the scope over which the function $\varphi(x(t))$ can range depends closely on the value of the parameter $\lambda$: the less value the $\lambda$ is, the closer the upper-bound of the scope that $\varphi(x(t))$ can reach is to zero (see \figurename~\ref{fig:barrieronlambda}).
  \begin{figure}[!t]
  \centering
  \includegraphics[scale=0.5]{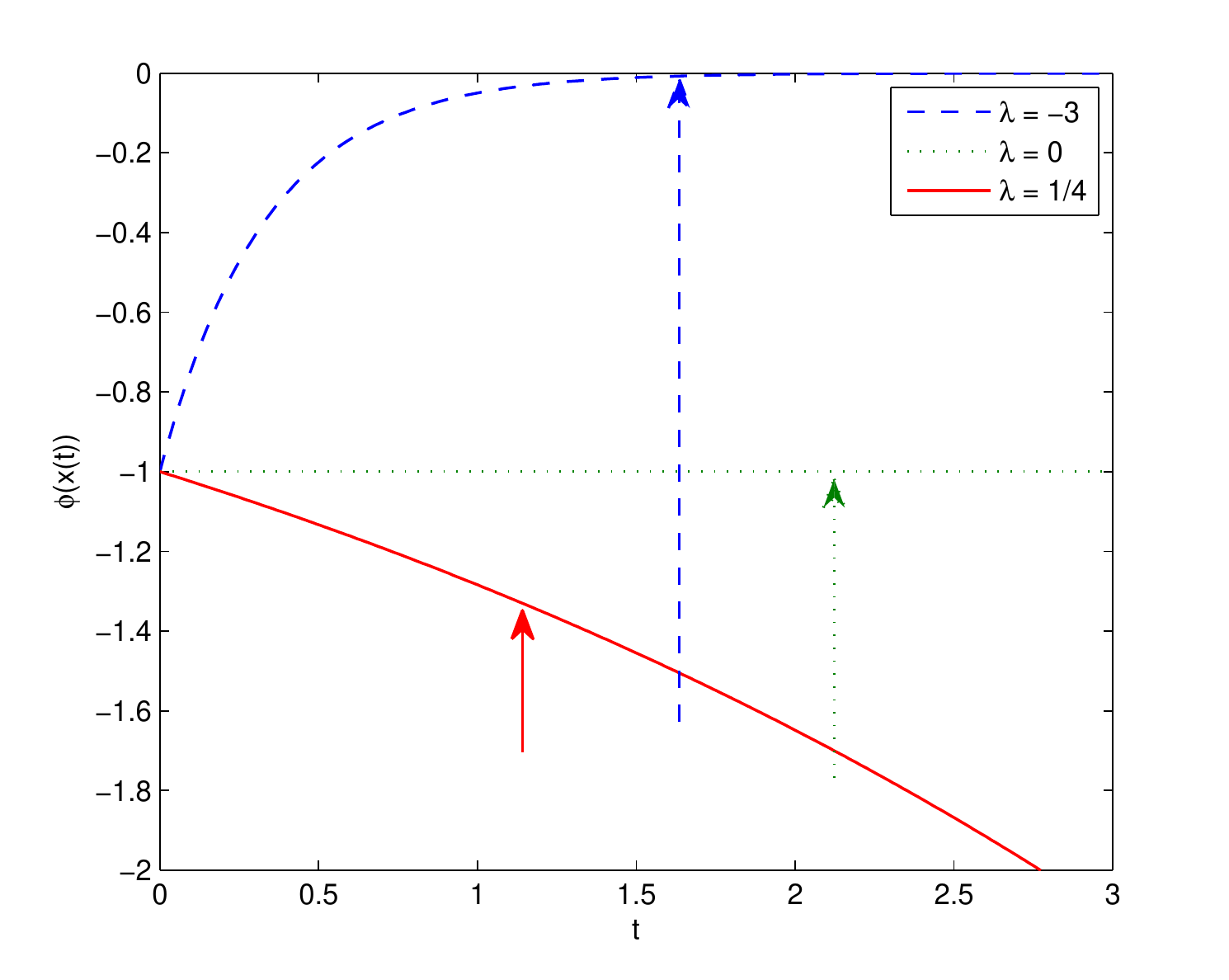}
  \caption{Dependency of Barrier Certificate Condition on $\lambda$. As the value of $\lambda$ decreases (e.g. from $1/4$ to $-3$), the upper-bound of the value of $\varphi(x(t))$ approaches to zero, which means the barrier certificate condition becomes less conservative}
   \label{fig:barrieronlambda}
\end{figure}
  Roughly speaking, the values of $\lambda$ are divided into three classes according to the conservativeness of the barrier certificate condition:
\begin{itemize}
  \item $\lambda = 0$. In this case, the formula \eqref{cond12} is degenerated to $\frac{\partial{\varphi}}{\partial{x}} f(x) \leq 0$, which is the case of \emph{Convex Condition}. This condition implies that the value of $\varphi(x(t))$ will never get close to zero over time $t$. Thus, the condition is very conservative.
  \item $\lambda <0$. In this case, we know that 1) $\varphi(x(t)) \leq \varphi(x_0) e^{\lambda t}\leq 0$, and 2) $\frac{\partial{\varphi}}{\partial{x}} f(x) \leq \lambda \varphi(x)\geq 0$. These two inequalities together imply that the value of $\varphi(x(t))$ can increase over the time $t$ but never get across the upper bound $0$, provided that $\varphi(x(0))\leq 0$ at the beginning.
  \item $\lambda >0$. In this case, $\frac{\partial{\varphi}}{\partial{x}} f(x) \leq \lambda \varphi(x)\leq 0$, which means that the value of $\varphi(x(t))$ get far away from $0$. Apparently, the condition is much more conservative than the first case.
\end{itemize}
Therefore, as long as we let $\lambda <0$, we can get less conservative barrier certificate conditions than \emph{Convex Condition}. Note that \emph{Exponential Condition} is convex as well and its convexity can be easily proved by verifying that for any two functions $\varphi_1(x)$ and $\varphi_2(x)$ satisfying the formulae~\eqref{cond11}--\eqref{cond13} and any $\theta$ with $0\leq \theta \leq 1$, $\varphi(x)= \theta \varphi_1(x) + (1-\theta) \varphi_2(x)$ satisfies the formulae~\eqref{cond11}--\eqref{cond13} as well. Based on this fact, we can convert the problem of constructing barrier certificate into the problem of convex optimization which we will discuss in Section~\ref{sec:computemethod}.

In addition, as a generalization of \emph{Convex Condition}, \emph{Differential Invariant} is basically as conservative as \emph{Convex Condition}. Here we present informally an explanation on this point. The differences in their definitions include mainly two aspects:
\begin{enumerate}
  \item invariant template: \emph{Convex Condition} employs a single inequality $p(x)\leq 0$ as the invariant template while \emph{Differential Invariant} employs a conjunction $\bigwedge_{i=1}^m q_i(x)\rhd_i r_i(x)$, where $\rhd_i$ denotes a connective in $\{=,\geq,>,\leq,<\}$.
  \item inductive condition: \emph{Convex Condition} employs $\mathcal{L}_f(p)\leq 0$ as the inductive condition while \emph{Differential Invariant} employs the conjunction $\bigwedge_{i=1}^m \mathcal{L}_f q_i\rhd_i \mathcal{L}_f r_i$, which results from applying the Lie derivative to each of the conjuncts in the invariant template respectively.
\end{enumerate}
Note that each conjunct of a \emph{Differential Invariant} is still an inductive invariant by itself, which is named \emph{Sub-Differential-Invariant} here. Based on the above definition, we can easily prove that every \emph{Sub-Differential-Invariant} $q_i(x)\rhd_i r_i(x)$ satisfies \emph{Convex Condition}. For example, suppose we have a \emph{Sub-Differential-Invariant} $q_i(x) > r_i(x)$ and the corresponding inductive condition $\mathcal{L}_f q_i>\mathcal{L}_f r_i$, let $p(x)=r_i(x) - q_i(x)$, then we can obtain an equivalent inductive invariant $p(x)<0$ and the corresponding inductive condition $\mathcal{L}_f p=\mathcal{L}_f q_i - \mathcal{L}_f r_i<0$, which implies $p(x)\leq 0$ and $\mathcal{L}_f p \leq 0$ hold. Therefore, the \emph{Sub-Differential-Invariant} $q_i(x)>r_i(x)$ satisfies \emph{Convex Condition}. Similarly, all the other cases of $q_i(x)\rhd_i r_i(x)$ can be proved to satisfy \emph{Convex Condition}. Hence, \emph{Sub-Differential-Invariant} is no less conservative than \emph{Convex Condition}. By taking a conjunction of multiple \emph{Sub-Differential Invariant}s, \emph{Differential Invariant} actually enhances the ability to over-approximate complex-shaped reachable sets. However, this does not overcome the drawback that no trajectory of the system can move towards the boundary of the over-approximation formed by a \emph{Differential Invariant}. Therefore, in this sense, we say that \emph{Differential Invariant} is basically as conservative as \emph{Convex Condition} and consequently is more conservative than \emph{Exponential Condition}.

In the following subsection, we extend the barrier certificate condition for continuous systems to hybrid systems.

\subsection{Barrier Certificate Condition for Hybrid Systems}
Different from the barrier certificate for a continuous system, the barrier certificate for a hybrid system consists of a set of functions $\{\varphi_l(x)| l \in L\}$, each of which corresponds to a discrete location of the system and forms a barrier between the reachable set and the unsafe set at that individual location. For each function $\varphi_l(x)$ at location $l$, in addition to defining constraints for the continuous flows, the barrier certificate conditions have to take into account all the discrete transitions starting from location $l$ to make the overall barrier certificate an inductive invariant. Formally, we define the barrier certificate condition for hybrid systems as the following theorem.
\begin{theorem}[Hybrid-Exp Condition]\label{theorem2}
 Given the hybrid system $\mathcal{H}=\langle L,X,$ $E,R,G,I,F\rangle$, the initial set $\mathcal{X}_0$ and the unsafe set $\mathcal{X}_u$ of $\mathcal{H}$, then, for any given set of constant real numbers $S_{\lambda}=\{\lambda_l\in \mathbb{R}| l\in L\}$ and any given set of constant non-negative real numbers $S_{\gamma}=\{\gamma_{ll'} \in \mathbb{R}_+| (l,l') \in E\}$, if there exists a set of functions $\{\varphi_l(x)| \varphi_l(x) \in \mathcal{C}^1(\mathbb{R}^n),l\in L\}$ such that, for all $l\in L$ and $(l,l')\in E$, the following conditions hold:
\begin{align}
 &\forall x \in Init(l): \varphi_l(x) \leq 0\label{cond21}\\
 &\forall x \in I(l): \mathcal{L}_{f_l}\varphi_l(x) - \lambda_l \varphi_l(x) \leq 0\label{cond22}\\
 &\forall x \in G(l,l'), \forall x' \in R((l,l'),x): \gamma_{ll'} \varphi_l(x)-\varphi_{l'}(x') \geq 0\label{cond23}\\
 &\forall x \in Unsafe(l): \varphi_l(x) > 0\label{cond24}
\end{align}
where $Init(l)$ and $Unsafe(l)$ denote respectively the initial set and the unsafe set at location $l$, then the safety property is satisfied by $\mathcal{H}$.
\end{theorem}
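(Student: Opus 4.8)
The plan is to reduce the hybrid case to a repeated application of Theorem~\ref{theorem1} along the continuous flow segments, gluing consecutive segments together through the reset condition~\eqref{cond23}. Concretely, I would establish the stronger claim that every reachable state $(l,x)$ of $\mathcal{H}$ satisfies $\varphi_l(x)\leq 0$; combined with~\eqref{cond24}, which forces $\varphi_l(x)>0$ on $Unsafe(l)$, this immediately yields that no reachable state lies in $\mathcal{X}_u$, so the safety property holds. The argument proceeds by induction on the number $k$ of discrete transitions taken to reach $(l,x)$.

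For the base case ($k=0$), a trajectory starts at some $(l_0,x_0)\in\mathcal{X}_0$, so $x_0\in Init(l_0)$ and~\eqref{cond21} gives $\varphi_{l_0}(x_0)\leq 0$. During the continuous flow at $l_0$ the state remains in the invariant region $I(l_0)$, so~\eqref{cond22} supplies exactly the hypothesis~\eqref{cond12} needed by Theorem~\ref{theorem1}, now localized to $I(l_0)$. Re-running the identical ODE computation---setting $g(x)=\mathcal{L}_{f_{l_0}}\varphi_{l_0}(x)-\lambda_{l_0}\varphi_{l_0}(x)\leq 0$ and integrating~\eqref{eqcomb}---yields $\varphi_{l_0}(x(t))\leq\varphi_{l_0}(x_0)e^{\lambda_{l_0}t}\leq 0$ for every $t$ along the flow, exactly as in~\eqref{bxtineq}.

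For the inductive step, suppose the claim holds through $k$ transitions, so that at the instant a transition $(l,l')\in E$ is taken we already have $\varphi_l(x)\leq 0$ at the jump point $x\in G(l,l')$. The reset sends $x$ to some $x'\in R((l,l'),x)$, and~\eqref{cond23} gives $\varphi_{l'}(x')\leq\gamma_{ll'}\varphi_l(x)$; since $\gamma_{ll'}\geq 0$ and $\varphi_l(x)\leq 0$, the product is non-positive, hence $\varphi_{l'}(x')\leq 0$. The subsequent continuous flow at $l'$ is then handled exactly as in the base case, with $(l_0,x_0)$ replaced by $(l',x')$, keeping $\varphi_{l'}\leq 0$ throughout. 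This closes the induction and establishes $\varphi_l(x)\leq 0$ on the whole reachable set.

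The difficulties here are bookkeeping rather than genuine obstacles. The main point requiring care is that~\eqref{cond22} is quantified only over $I(l)$, not all of $X$, so I must invoke the model's semantics---continuous evolution is confined to $I(l)$---to justify applying Theorem~\ref{theorem1}'s integration argument on each flow segment. A second point is that $R$ is set-valued, but since~\eqref{cond23} is universally quantified over $x'\in R((l,l'),x)$, the bound $\varphi_{l'}(x')\leq 0$ holds for \emph{every} admissible successor, so no choice issue arises. Finally, the induction covers every trajectory with finitely many transitions, which suffices because each reachable state is reached after finitely many jumps; the non-negativity of the $\gamma_{ll'}$ is precisely what makes the sign-preservation across resets work, and is the only place the hypothesis $S_\gamma\subseteq\mathbb{R}_+$ is used.
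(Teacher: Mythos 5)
Your proposal is correct and follows essentially the same route as the paper: induction on the number of discrete transitions along a trajectory, applying Theorem~\ref{theorem1}'s integration argument on each continuous flow segment and using condition~\eqref{cond23} together with $\gamma_{ll'}\geq 0$ to carry the sign of $\varphi$ across each jump. Your explicit remarks on why~\eqref{cond22} being quantified only over $I(l)$ suffices, and on the set-valued reset, are points the paper's proof leaves implicit, but the underlying argument is the same.
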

\begin{proof}
To prove this theorem, it is sufficient to prove that given any trajectory, say $\pi$, of the system $\mathcal{H}$, it cannot reach an unsafe state. Suppose the infinite time interval $\mathbb{R}_+$ associated with $\pi$ is divided into an infinite sequence of continuous time subintervals, i.e., $\mathbb{R}_+ = \bigcup_{n=0}^\infty I_n$, where $I_n = \{t\in \mathbb{R}_+| t_n \leq t \leq t_{n+1}\}$ is the time interval that the system spent at location $\rho(I_n)$ (where $\rho(I_n)$ returns the location corresponding to $I_n$), we define the trajectory as $\pi = \{x_{\rho(I_n)}(t)| t \in I_n, n \in \mathds{N}\}$, where $x_{\rho(I_0)}(t_0) \in Init(\rho(I_0))$. Then, our objective is to prove the following assertion:
\begin{equation}\label{assertion1}
\forall n \in \mathds{N}: \forall t\in I_n:\varphi_{\rho(I_n)}(x_{\rho(I_n)}(t))\leq 0.
\end{equation}
The basic proof idea is by induction.

\emph{Basis:}
$n=0$. According to Theorem~\ref{theorem1}, it's obvious that
\[\forall t\in I_0:\varphi_{\rho(I_0)}(x_{\rho(I_0)}(t))\leq 0\]

\emph{Induction:}
$n=k$. Assume for some $k$, \[\forall n \in [0,k]:\forall t\in I_n:\varphi_{\rho(I_n)}(x_{\rho(I_n)}(t))\leq 0\]
we mean to prove that \[\forall t\in I_{k+1}:\varphi_{\rho(I_{k+1})}(x_{\rho(I_{k+1})}(t))\leq 0\]

Case 1. (Discrete Transition) By the inductive assumption, we know that
\[\forall t\in I_k:\varphi_{\rho(I_k)}(x_{\rho(I_k)}(t))\leq 0\]
hence \[\forall t\in I_k:x(t)\in G(\rho(I_k),\rho(I_{k+1})) \implies \varphi_{\rho(I_k)}(x(t)) \leq 0\]
According to condition~\eqref{cond23}, we know that $\varphi_{\rho(I_{k+1})}(x_{\rho(I_{k+1})}(t_{k+1})) \leq 0$.

Case 2. (Continuous Transition) According to Case 1 and condition~\eqref{cond22}, we can conclude that $\forall t \in I_{k+1}:\varphi_{\rho(I_{k+1})}(x_{\rho(I_{k+1})}(t))\leq 0$ by Theorem~\ref{theorem1}.

By induction, we know that the assertion~\eqref{assertion1} holds. Therefore, the safety property is guaranteed.
\qed
\end{proof}

Informally, the formulae \eqref{cond21}, \eqref{cond22} and \eqref{cond24} together ensure that at each location $l \in L$, the system never evolves into an unsafe state continuously. The formula \eqref{cond23} ensures that the system never jumps from a safe state to an unsafe state discretely. By induction, the formulae \eqref{cond21}--\eqref{cond24} together guarantee the safety of the system.

\begin{remark}
The selection of the parameter set $S_\lambda$ is essential to the conservativeness of the barrier certificate conditions. As discussed in Subsection~\ref{contSys}, by setting all the elements of $S_\lambda$ to $0$, we can derive \emph{Convex Condition} for hybrid systems. However, \emph{Convex Condition} is too restrictive to be useful for hybrid systems. For example, see the hybrid system in \figurename~\ref{fig:convexcondexamp1}, there is a reset operation $x = x_r$ (which is often the case) at the transition $(l_2,l_1)$. Assume there exists a barrier certificate $\{\varphi_{l_1}(x),\varphi_{l_2}(x)\}$ if we set all the elements of $S_\lambda$ to $0$ and (without loss of generality) set all the elements of $S_\gamma$ to $1$, then for any trajectory containing at least two times of the transition $(l_2,l_1)$, one at time instant $t_1$ and another at $t_2$, $t_1 < t_2$, respectively, we can assert that $\varphi_{l_1}(x_{l_1t_1}) > \varphi_{l_1}(x_{l_1t_2})$ according to Theorem~\ref{theorem2}, this contradicts with $x_{l_1t_1} = x_{l_1t_2} = x_r$, that is, the barrier certificate satisfying \emph{Convex Condition} does not exist no matter what the unsafe set is.
\begin{figure}[!t]
  \centering
  \includegraphics[scale=0.45]{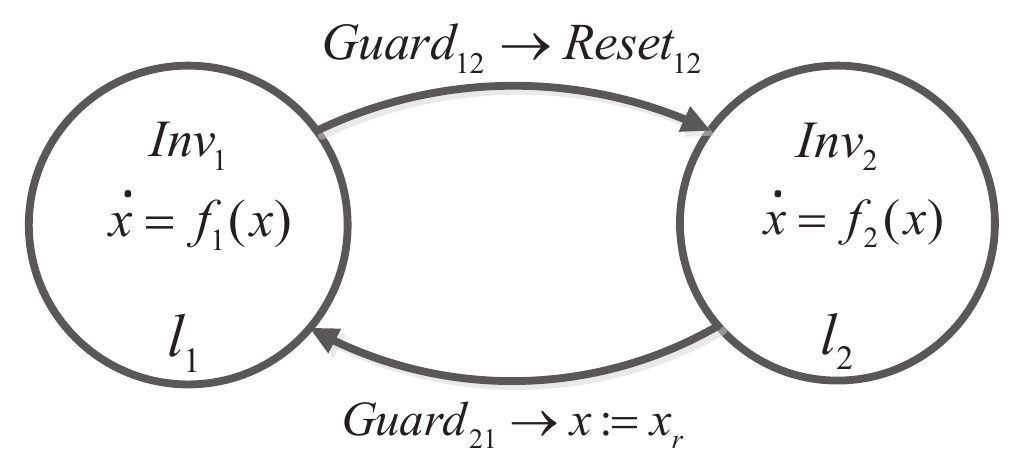}
  \caption{A hybrid system without barrier certificate satisfying \emph{Convex Condition}.}
   \label{fig:convexcondexamp1}
\end{figure}
Therefore, in order to make the barrier certificate condition less conservative, we try to choose negative values for $\lambda_l \in S_\lambda$ and theoretically: the less, the better. However, in practice, the optimal domain for $\lambda$ may depend on the specific computational method. For example, the interval $[-1,0)$ appears to be optimal and not too sensitive in-between for the semidefinite programming method used in this paper.

The selection of $S_\gamma$ is relatively simple. We usually set all of its elements to $1$ except for the discrete jumps with a reset operation that is independent of the pre-state of the jump, for which we usually set $\gamma_{ll'}$ to $0$.
\end{remark}

\section{Construction Method for Barrier Certificate}\label{sec:computemethod}
Constructing inductive invariants for general hybrid systems is very hard. Fortunately, for some existing inductive conditions, several computational methods are available for semialgebraic hybrid systems. The most representative methods include the fixed-point method based on saturation~\cite{platzer2008computing}, the constraint-solving methods based on semidefinite programming~\cite{prajna2004safety} and quantifier elimination~\cite{gulwani2008constraint} and the Grobner-bases method~\cite{tiwari2004nonlinear},~\cite{sankaranarayanan2010automatic}. Similar to \emph{Convex Condition}, \emph{Exponential Condition} defines a convex set of barrier certificate functions as well and hence can be solved by semidefinite programming method supposing the hybrid system is semialgebraic and the barrier certificate function $\varphi(x)$ is a polynomial.

In our computational method, a barrier certificate is assumed to be a set $\Phi=\{\varphi_l(x)| l\in L\}$ of multivariate polynomials of fixed degrees with a set of unknown real coefficients. According to the constraint inequalities in Theorem~\ref{theorem1} or Theorem~\ref{theorem2}, we can obtain a set of positive semidefinite (\emph{PSD}) polynomials $Q = \{Q_i| Q_i(x) \geq 0, deg(Q_i)= 2n, x \in \mathbb{R}^n, n \in \mathds{N}\}$, where $deg(\cdot)$ returns the degree of a polynomial. Note that a polynomial $Q(x)$ of degree $2k$ is said to be \emph{PSD} \emph{if and only if} $Q(x)\geq 0$ for all $x\in \mathbb{R}^n$. Thus, our objective is to find a set of real-valued coefficients for $\varphi_l\in \Phi$ to make all the $Q_i\in Q$ be \emph{PSD}.

A famous sufficient condition for a polynomial $P(x)$ of degree $2k$ to be \emph{PSD} is that it is a sum-of-squares (\emph{SOS}) $P(x) = \sum q_i(x)^2$ for some polynomials $q_i(x)$ of degree $k$ or less~\cite{lasserre2007sufficient}. Furthermore, it is equivalent to that $P(x)$ has a positive semidefinite quadratic form, i.e., $P(x) = v(x)Mv(x)^T$, where $v(x)$ is a vector of monomials with respect to $x$ of degree $k$ or less and $M$ is a real symmetric \emph{PSD} matrix with the coefficients of $P(x)$ as its entries. Therefore, the problem of finding a \emph{PSD} polynomial $P(x)$ can be converted to the problem of solving a linear matrix inequality (LMI) $M\succeq 0$~\cite{boyd1994linear}, which can be solved by semidefinite programming~\cite{parrilo2003semidefinite}.

In our work, we extend SOSTOOLS based on the theory in this paper to implement an algorithm for discovering barrier certificate automatically.

\subsection{Sum-of-squares Transformation for Continuous System}\label{subsec:continuousmethod}
In order to be solvable for the barrier certificate condition by SOS programming, we need to restate it with multivariate polynomials. In this context, we assume that all the state sets involved in the condition are semialgebraic, that is, they can be written as $\{x \in \mathbb{R}^n|P_1(x) \geq 0,...,P_m(x)\geq 0, P_i(x)\in \mathbb{R}[x], 1\leq i\leq m\}$). For convenience, we write it compactly as $\{x \in \mathbb{R}^n|\mathcal{P}(x) \geq 0, \mathcal{P}(x) \in \mathbb{R}[x]^m\}$, where $\mathcal{P}(x)=(P_1(x),P_2(x),...,P_m(x))$. In addition, each dimension of the vector field $f(x)$ and the barrier certificate function $\varphi(x)$ are all polynomials in $\mathbb{R}[x]$. Based on the previous assumption, we present the sum-of-squares transformation of \emph{Exponential Condition} for continuous systems as the following corollary.
\begin{corollary}\label{corollary1}
  Given the continuous polynomial system \eqref{sec:ContinuousSys} and the initial set $X_0 = \{x\in \mathbb{R}^n|I_0(x)\geq 0, I_0(x) \in \mathbb{R}[x]^r\}$ and the unsafe set $X_u = \{x\in \mathbb{R}^n|U(x) \geq 0, U(x) \in \mathbb{R}[x]^s\}$, where $r$ and $s$ are the dimensions of the polynomial vector spaces, for any $\lambda \in \mathbb{R}$ and any real number $\epsilon>0$, if there exists a polynomial function $\varphi(x)\in \mathbb{R}[x]$ and two \emph{SOS} polynomial vectors (i.e., every element of the vector is a \emph{SOS} polynomial) $\mu(x) \in \mathbb{R}[x]^r$ and $\eta(x)\in \mathbb{R}[x]^s$ satisfying that the following polynomials
\begin{align}
 &-\varphi(x) - \mu(x) I_0(x) \label{corocond1}\\
 &-\mathcal{L}_f\varphi(x) + \lambda   \varphi(x) \label{corocond2}\\
 &\varphi(x) - \eta(x) U(x) - \epsilon \label{corocond3}
\end{align}
are all \emph{SOS}s, then the safety property is satisfied by the system \eqref{sec:ContinuousSys}.
\end{corollary}
\begin{proof}
It is sufficient to prove that any $\varphi(x)$ satisfying \eqref{corocond1}--\eqref{corocond3} also satisfies \eqref{cond11}--\eqref{cond13}. By \eqref{corocond1}, we have $-\varphi(x) - \mu(x) I_0(x)\geq 0$, that is, $\varphi(x) \leq -\mu(x) I_0(x)$. Because for any $x \in X_0$, $-\mu(x) I_0(x)\leq 0$, this means $\varphi(x) \leq 0$. Similarly, we can derive \eqref{cond12} from \eqref{corocond2}. By \eqref{corocond3}, it's easy to prove that $\varphi(x)-\epsilon \geq 0$ holds for any $x \in X_u$. Since $\epsilon$ is greater than $0$, then the condition~\eqref{cond13} holds. Therefore, the system~\eqref{sec:ContinuousSys} is safe.
\qed
\end{proof}
\begin{remark}
Since the polynomials~\eqref{corocond1}--\eqref{corocond3} are required to be \emph{SOS}s, each of them can be transformed to a positive semidefinite quadratic form $v(x)M_iv(x)^T$, where $M_i$ is a real symmetric \emph{PSD} matrix with the coefficients of $\varphi(x)$, $\mu(x)$ and $\eta(x)$ as its variables. As a result, we obtain a set of \emph{LMI}s $\{M_i \succeq 0\}$ which can be solved by semidefinite programming.
\end{remark}

\begin{algorithm}[!t]\label{algorithm1}
\caption{Computing Barrier Certificate for Continuous System}
{\footnotesize
\SetKwFunction{GenPoly}{GenPoly}\SetKwFunction{Deg}{Deg}\SetKwFunction{Diff}{Diff}\SetKwFunction{SOSProgram}{SOSProgram}
\SetKwInOut{Input}{input}\SetKwInOut{Variables}{Variables}\SetKwInOut{Output}{output}\SetKwInOut{Constants}{Constants}
\KwIn {$f$: array of polynomial vector field; $I_0$: array of polynomials defining $X_0$; $U$: array of polynomials defining $X_u$}
\KwOut{$\varphi$: barrier certificate polynomial}
\Variables{$\lambda$: a real negative value; $d$: degree of $\varphi$}
\Constants{$\Lambda$: array of candidate values for $\lambda$; $\epsilon$: a positive value; $dMin$, $dMax$: the minimal degree and maximal degree of $\varphi$ to be found}
\BlankLine
Initialize. Set $\Lambda$ to a set of negative values between $-1$ and $0$; Set $\epsilon$ to a small positive value; Set $dMin$ and $dMax$ to positive integer respectively\;
Pick $\lambda$ and $d$. For each $\lambda \in \Lambda$ and for each $d$ from $dMin$ to $dMax$, perform step \ref{loopstart}--\ref{loopend} until a barrier certificate is found\;
Decide the degree of $\mu(x)$ and $\eta(x)$ according to $d$. To be \emph{SOS}s for both \eqref{corocond1} and \eqref{corocond3}, at least one of the degrees of $\mu(x) I_0(x)$ and $\eta(x) U(x)$ is greater than or equal to the degree of $\varphi(x)$\;\label{loopstart}
Generate complete polynomials $\varphi(x)$, $\mu(x)$ and $\eta(x)$ of specified degree with unknown coefficient variables\;
Eliminate the monomials of odd top degrees in \eqref{corocond1}--\eqref{corocond3}, $\mu(x)$ and $\eta(x)$, respectively. To be a \emph{SOS}, a polynomial has to be of even degree. Concretely, let the coefficients of the monomials to be eliminated be zero to get equations about coefficient variables and then reduce the number of coefficient variables by solving the equations and substituting free variables for non-free variables in all the related polynomials\;\label{eliminatemon}
Perform the \emph{SOS} programming on the positive semidefinite constraints \eqref{corocond1}--\eqref{corocond3} and $\mu(x)$, $\eta(x)$\;\label{programming}
Check if a feasible solution is found, if not found, continue with a new loop; else, check if the solution can indeed enable the corresponding polynomials to be \emph{SOS}s, if so, return $\varphi(x)$; else, for all the polynomials in the programming, eliminate all the monomials whose coefficients have too small absolute values(usually less than $10^{-5}$) by using the same method as step~\ref{eliminatemon}, then go to step~\ref{programming} unless an empty polynomial is produced\;\label{loopend}}
\end{algorithm}

We use \emph{Algorithm}~\ref{algorithm1} to compute the desired barrier certificate. In the algorithm, we first choose a small set of negative values $\Lambda$ as a candidate set for $\lambda$ and an integer interval $[dMin,dMax]$ as a candidate set for degree $d$ of $\varphi(x)$. Then, we attempt to find a barrier certificate satisfying the conditions~\eqref{corocond1}--\eqref{corocond3} for a fixed pair of $\lambda$ and $d$ until such one is found. Theoretically, according to the analysis about the dependence of conservativeness of barrier certificate on the value of $\lambda$, we should set $\lambda$ to as small negative value as possible. However, experiments show that too small negative numbers for $\lambda$ often lead the semidefinite programming function to numerical problems. In practice, the negative values in the interval $[-1,0)$ are good enough for $\lambda$ to verify very critical safety properties. Note that the principle for step~\ref{loopstart} in \emph{Algorithm 1} is that if $\varphi(x)$ has a dominating degree in both polynomials, there couldn't exist a solution that make both polynomials be \emph{SOS}s because $-\varphi(x)$ and $\varphi(x)$ occur in \eqref{corocond1} and \eqref{corocond3} simultaneously. The motive for eliminating the monomials with small coefficients in step~\ref{loopend} is from the observation that those monomials are usually the cause of the failed \emph{SOS} decomposition for the polynomials when the semidefinite programming function gives a seemingly feasible solution.

The idea for constructing barrier certificates for continuous systems can be easily extended to hybrid systems. We describe it in the following subsection.

\subsection{Sum-of-squares Transformation for Hybrid System}\label{subsec:hybridmethod}
Similar to continuous system, in order to be solvable by semidefinite programming, we need to limit the hybrid system model in Section~\ref{sec:Preliminaries} to semialgebraic hybrid system.

Consider the hybrid system $\mathbb{H} = \langle L,X,E,R,G,I,F\rangle$, where the mappings $F,R,G,I$ of $\mathbb{H}$ are defined with respect to polynomial inequalities as follows:
\begin{itemize}
  \item $F:l\mapsto~f_l(x)$
  \item $G:(l,l')\mapsto\{x\in\mathbb{R}^n|G_{ll'}(x)\geq~0, G_{ll'}(x)\in\mathbb{R}[x]^{p_{ll'}}\}$
  \item $R:(l,l',x)\mapsto~\{x'\in\mathbb{R}^n|R_{ll'x}(x')\geq~0, R_{ll'x}(x')\in\mathbb{R}[x]^{q_{ll'}}\}$
  \item $I: l\mapsto\{x\in\mathbb{R}^n|I_l(x)\geq~0, I_l(x)\in\mathbb{R}[x]^{r_l}\}$
\end{itemize}
and the mappings of the initial set and the unsafe set are defined as follows:
\begin{itemize}
  \item $\operatorname{Init}:l\mapsto\{x\in\mathbb{R}^n| \operatorname{Init}_l(x)\geq 0, \operatorname{Init}_l(x)\in\mathbb{R}[x]^{s_l}\}$
  \item $\operatorname{Unsafe}:l\mapsto\{x\in\mathbb{R}^n| \operatorname{Unsafe}_l(x)\geq 0, \operatorname{Unsafe}_l(x)\in\mathbb{R}[x]^{t_l}\}$
\end{itemize}
where $p_{ll'}$, $q_{ll'}$, $r_l$, $s_l$ and $t_l$ is the dimension of polynomial vector space.
Then we have the following corollary for constructing barrier certificate for the semialgebraic hybrid system $\mathbb{H}$.
\begin{corollary}\label{corollary3}
Let the hybrid system $\mathbb{H}$ and the initial state set mapping $Init$ and the unsafe state set mapping $Unsafe$ be defined as the above. Then, for any given set of constant real numbers $S_{\lambda}=\{\lambda_l\in\mathbb{R}| l\in~L\}$ and any given set of constant non-negative real numbers $S_{\gamma}=\{\gamma_{ll'}\in\mathbb{R}_+| (l,l')\in~E\}$ ,and any given small real number $\epsilon>0$, if there exists a set of polynomial functions $\{\varphi_l(x)\in\mathbb{R}[x]| l\in~L\}$ and five sets of \emph{SOS} polynomial vectors $\{\mu_l(x)\in\mathbb{R}[x]^{s_l}| l\in~L\}$, $\{\theta_l(x)\in\mathbb{R}[x]^{r_l}| l\in~L\}$, $\{\kappa_{ll'}(x) \in \mathbb{R}[x]^{p_{ll'}}| (l,l') \in E\}$ , $\{\sigma_{ll'}(x)\in\mathbb{R}[x]^{q_{ll'}}| (l,l')\in~E\}$ and $\{\eta_l(x)\in\mathbb{R}[x]^{t_l}| l\in~L\}$, such that the polynomials
\begin{align}
&\varphi_l(x) - \mu_l(x) \operatorname{Init}_l(x)\label{contCond1}\\
&\lambda_l \varphi_l(x) - \mathcal{L}_{f_l}\varphi_l(x)- \theta_l(x) I_l(x)\label{contCond2}\\
&\gamma_{ll'} \varphi_l(x) - \varphi_{l'}(x') -\kappa_{ll'}(x) G_{ll'}(x)-\sigma_{ll'}(x') R_{ll'x}(x') \label{contCond3}\\
&\varphi_l(x) - \epsilon - \eta_l(x) \operatorname{Unsafe}_l(x)\label{contCond4}
\end{align}
are \emph{SOS}s for all $l\in L$ and $(l,l')\in E$, then the safety property is satisfied by the system $\mathbb{H}$.
\end{corollary}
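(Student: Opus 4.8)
The plan is to reduce the statement to Theorem~\ref{theorem2} exactly as Corollary~\ref{corollary1} reduces to Theorem~\ref{theorem1}: I would show that any collection $\{\varphi_l\}$ admitting the SOS certificates \eqref{contCond1}--\eqref{contCond4} (together with the prescribed SOS multiplier vectors) necessarily satisfies the four constraint inequalities \eqref{cond21}--\eqref{cond24} of the \emph{Hybrid-Exp Condition}; safety then follows immediately by invoking Theorem~\ref{theorem2}. The single tool driving every step is the elementary multiplier (S-procedure) fact: if $\sigma(x)$ is an SOS vector and the polynomial $h(x) - \sigma(x)\mathcal{P}(x)$ is SOS, then $h$ is nonnegative wherever $\mathcal{P}(x)\geq 0$, because on that set $\sigma(x)\mathcal{P}(x)\geq 0$ while the SOS assumption forces $h(x)\geq \sigma(x)\mathcal{P}(x)$ everywhere.

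The three purely location-wise conditions are then direct copies of the continuous argument, one per mode. From \eqref{contCond1} and the nonnegativity of $\mu_l$ on $Init(l)$ I would recover \eqref{cond21}; from \eqref{contCond2} with multiplier term $\theta_l(x) I_l(x)\geq 0$ on $I(l)$ I obtain $\mathcal{L}_{f_l}\varphi_l(x) - \lambda_l\varphi_l(x)\leq 0$, i.e. \eqref{cond22}; and from \eqref{contCond4}, using $\eta_l(x)\operatorname{Unsafe}_l(x)\geq 0$ on $Unsafe(l)$ together with the strictly positive slack $\epsilon$, I get $\varphi_l(x)\geq\epsilon>0$, which is \eqref{cond24}. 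Each of these is verbatim the reasoning in the proof of Corollary~\ref{corollary1}, now indexed by $l\in L$.

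The one genuinely new step --- and the part I expect to require the most care --- is deriving the transition constraint \eqref{cond23} from \eqref{contCond3}, since \eqref{contCond3} is a polynomial in the \emph{combined} variables $(x,x')\in\mathbb{R}^n\times\mathbb{R}^n$: the pre-state $x$ is constrained by the guard $G_{ll'}(x)\geq 0$ with multiplier $\kappa_{ll'}(x)$, while the post-state $x'$ is constrained by the reset relation $R_{ll'x}(x')\geq 0$ with multiplier $\sigma_{ll'}(x')$. I would apply the multiplier fact over the product space, treating $\{(x,x') : G_{ll'}(x)\geq 0,\ R_{ll'x}(x')\geq 0\}$ as the feasible set; on this set both $\kappa_{ll'}(x)G_{ll'}(x)$ and $\sigma_{ll'}(x')R_{ll'x}(x')$ are nonnegative, so the SOS property of \eqref{contCond3} yields $\gamma_{ll'}\varphi_l(x)-\varphi_{l'}(x')\geq 0$, which is exactly \eqref{cond23}. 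The subtlety to handle cleanly is that the reset set, and hence the polynomial $R_{ll'x}$, is itself parameterized by $x$, so the nonnegativity argument must be made pointwise over the relational set of admissible jumps $(x,x')$ rather than over two independent sets; once the certificate is read as a single SOS polynomial in all the variables, this is a routine extension of the scalar S-procedure. Collecting the four implications and feeding them into Theorem~\ref{theorem2} completes the proof.
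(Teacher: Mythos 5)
Your proposal is correct and is exactly the paper's argument: the paper's own proof of Corollary~\ref{corollary3} is a one-line reduction to Theorem~\ref{theorem2} ``similar to Corollary~\ref{corollary1},'' i.e.\ precisely the multiplier/S-procedure derivation of \eqref{cond21}--\eqref{cond24} that you spell out, including reading \eqref{contCond3} as a single SOS polynomial in the joint variables $(x,x')$. One small caveat: for your derivation of \eqref{cond21} to go through, \eqref{contCond1} must carry the sign of \eqref{corocond1}, namely $-\varphi_l(x) - \mu_l(x)\operatorname{Init}_l(x)$; as literally printed it would yield $\varphi_l(x)\geq 0$ on $Init(l)$, so you are (rightly) working with the evidently intended corrected form.
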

\begin{proof}
Similar to Corollary~\ref{corollary1}, it's easy to prove that any set of polynomials $\{\varphi_l(x)\}$ satisfying \eqref{contCond1}--\eqref{contCond4} also satisfies \eqref{cond21}--\eqref{cond24}, hence the hybrid system $\mathbb{H}$ is safe.
\qed
\end{proof}

The algorithm for computing the barrier certificates for hybrid systems is similar to the algorithm for continuous systems except that it needs to take into account the constraint~\eqref{contCond3} for the discrete transitions. We do not elaborate on it here any more. Note that the strategy for the selection of $\lambda$'s for continuous system applies here as well and we only need to set all the elements of $S_\gamma$ to $1$ except for the discrete transition whose post-state is independent of the pre-state, where we set $\gamma_{ll'}$ to $0$ to reduce the computational complexity.


\section{Examples}\label{sec:examples}
\subsection{Example 1}\label{subsec:example1}
Consider the two-dimensional system (from \cite{khalil2002nonlinear} page 315)
\begin{gather*}\label{system1}
\begin{bmatrix}\dot{x_1} \\ \dot{x_2}\end{bmatrix} = \begin{bmatrix} x_2\\ -x_1 + \frac{1}{3}x_1^3 - x_2\end{bmatrix}
\end{gather*}
with $\mathcal{X} = \mathbb{R}^2$, we want to verify that starting from the initial set $X_0 = \{x \in \mathbb{R}^2| (x_1-1.5)^2 + x_2^2 \leq 0.25\}$, the system will never evolve into the unsafe set $X_u = \{x \in \mathbb{R}^2| (x_1 + 1)^2 + (x_2 + 1)^2 \leq 0.16\}$. We attempted to use both the method based on \emph{Convex Condition} proposed in \cite{prajna2007framework} and the method based on \emph{Exponential Condition} in this paper to find the barrier certificates with a degree ranging from $2$ to $10$. (Note that in~\cite{gulwani2008constraint}, \cite{taly2009deductive}, the inductive invariants are not sufficient in general according to~\cite{taly2011synthesizing} and hence cannot be applied to our examples. The work of [19] applies only to a very special class of hybrid systems which is not applicable to our examples either.) During this process, all the programming polynomials are complete polynomials automatically generated (instead of the non-complete polynomials consisting of painstakingly chosen terms) and all the computations are performed in the same environment. The result of the experiment is listed in Table~\ref{tbl:comparison}. The first column is the degree of the barrier certificate to be found, the second column is the amount of time spent by the method based on \emph{Convex Condition}, and the rest columns are the amount of time spent by the method based on \emph{Exponential Condition} for different value of $\lambda$. Note that the symbol $\times$ in the table indicates that the method failed to find a barrier certificate with the corresponding degree either because the semidefinite programming function found no feasible solution or because it ran into a numerical problem.
\begin{table}\label{tbl:comparison}
  \caption{Computing results for \emph{Convex Condition} and \emph{Exponential Condition}. \emph{Exponential Condition} shows much stronger capability in finding barrier certificates.}
  \centering
  \small
  \begin{tabular}{|c||c||c|c|c|}
    \hline
    \multicolumn{1}{|c||}{{Degree}}&\multicolumn{1}{c||}{Convex Condition}&\multicolumn{3}{c|}{Exponential Condition}\\
    \cline{2-5}
    of&\multirow{2}{*}{$Time$(sec)}&\multicolumn{3}{c|}{$Time$(sec)}\\
    \cline{3-5}
    $\varphi(x)$&&$\lambda=\frac{-1}{8}$&$\lambda=\frac{-1}{4}$&$\lambda=-1$\\
    \hline
    \hline
     2&$\times$&0.4867&0.4836&0.2496\\
     3&$\times$&0.5444&0.6224&0.4976\\
     4&0.4368&0.4103&0.4072&0.3853\\
     5&$\times$&0.4321&0.4103&0.3947\\
     6&$\times$&0.3214&0.3011&0.2714\\
     7&$\times$&0.9563&0.9532&0.9453\\
     8&$\times$&0.9188&0.8970&0.7893\\
     9&$\times$&1.4944&1.4149&1.5132\\
     10&$\times$&1.4336&1.3931&1.3650\\
    \hline
  \end{tabular}
\end{table}

As shown in Table~\ref{tbl:comparison}, the method based on \emph{Convex Condition} succeeded only in one case ($Degree = 4$) due to the conservativeness of \emph{Convex Condition}. Comparably, our method found all the barrier certificates of the specified degrees ranging from $2$ to $10$. Especially, the lowest degree of barrier certificate we found is quadratic: $\varphi(x) = -.86153-.87278x_1-1.1358x_2-.23944x_1^2-.5866x_1x_2$ with $\mu(x)=0.75965$ and $\eta(x)=0.73845$ when $\lambda$ is set to $-1$. The phase portrait of the system and the zero level set of $\varphi(x)$ are shown in \figurename~\ref{fig:example1}. Note that being able to find a lower degree of barrier certificates is essential in reducing the computational complexity.
\begin{figure}[!t]
  \centering
  \subfigure[Subsection \protect \ref{subsec:example1}]{
  \includegraphics[scale=0.5]{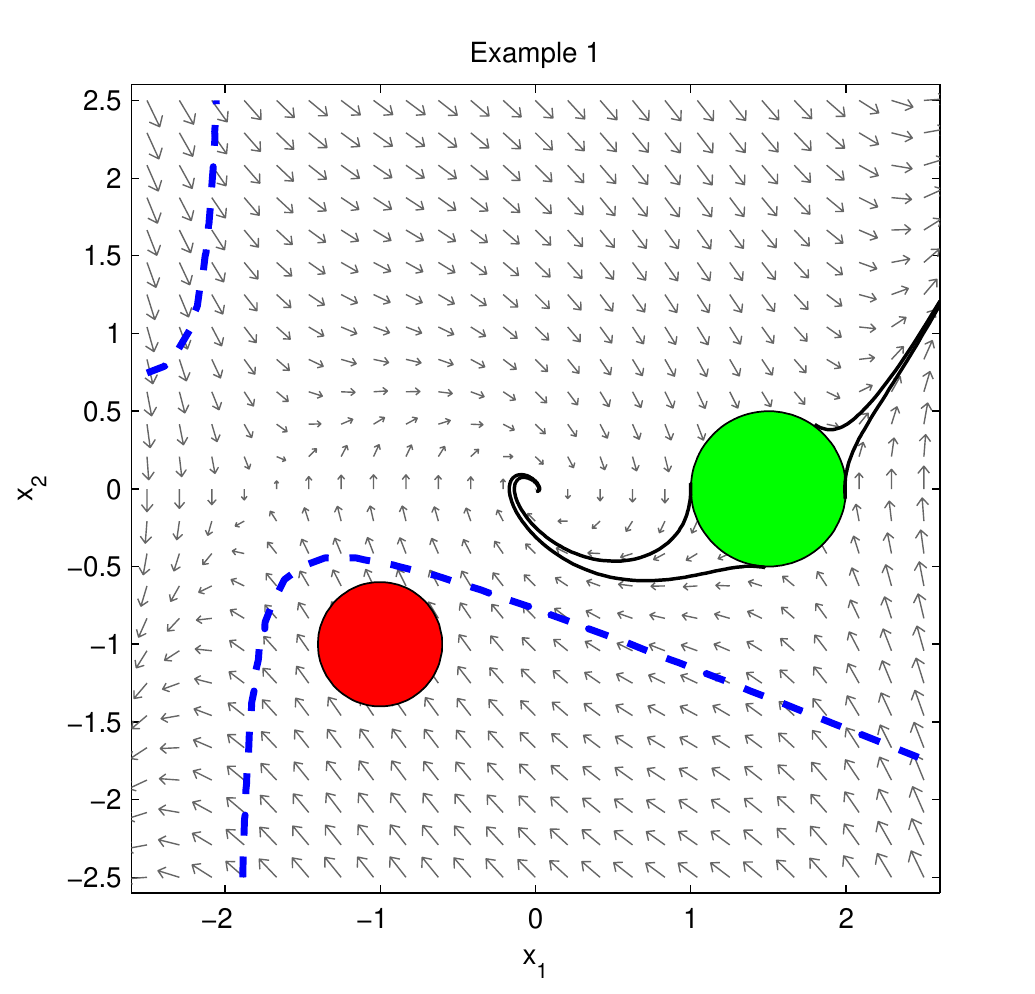}
  \label{fig:example1}
  }
  \subfigure[Subsection \protect \ref{subsec:example3}]{
  \includegraphics[scale=0.4]{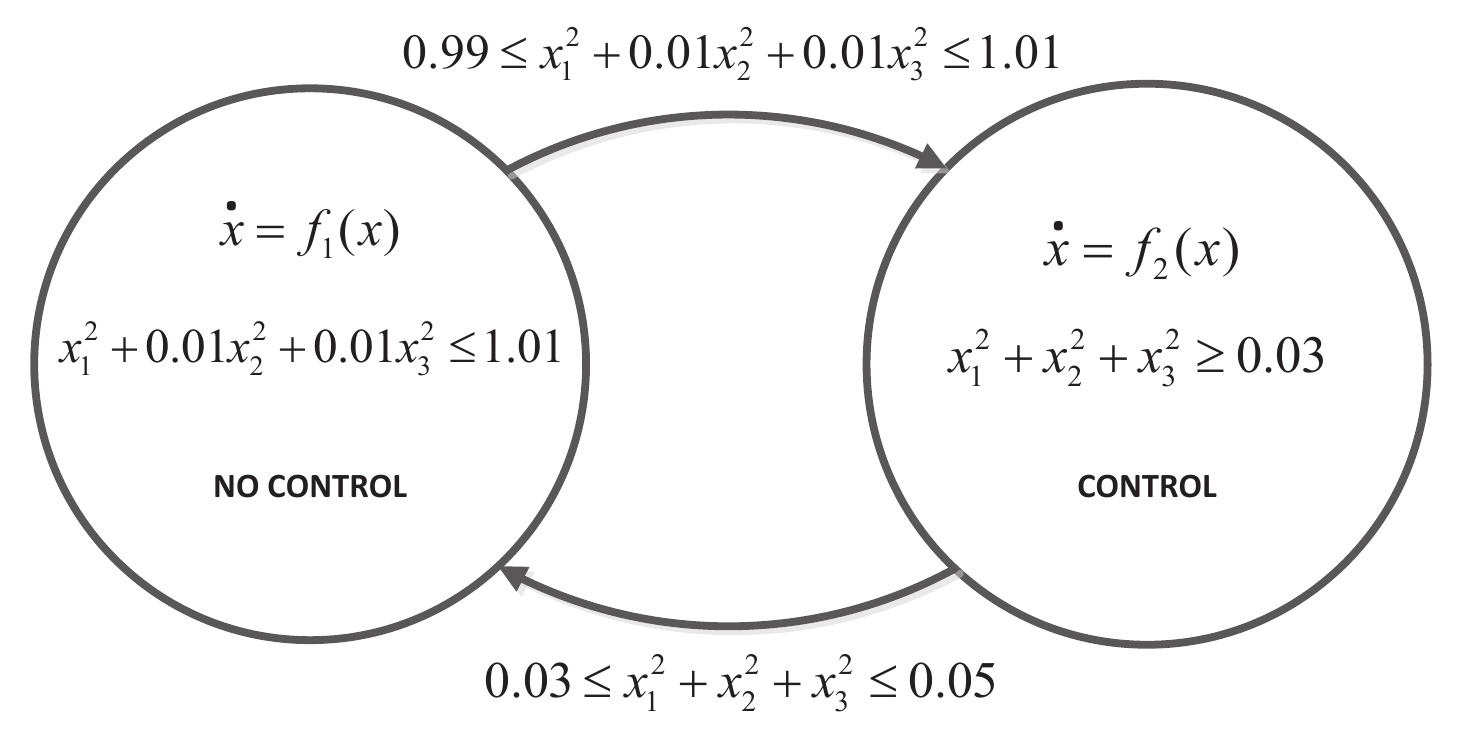}
   \label{fig:example3}
  }
  \caption{(a) Phase portrait of the system in Subsection \protect \ref{subsec:example1}. The solid patches from right to left are $X_0$ and $X_u$, respectively, the solid lines depict the boundary of the reachable region of the system from $X_0$, and the dashed lines are the zero level set of a quadratic barrier certificate $\varphi(x)$ which separates the unsafe region $X_u$ from the reachable region.
  (b) Discrete transition diagram of the hybrid system in Subsection \protect \ref{subsec:example3}.}
\end{figure}

In addition, we can see from Table~\ref{tbl:comparison} that the runtime of \emph{Exponential Condition}-based method decreases with the value of $\lambda$ for each fixed degree except for $Degree=3,9$, this observation can greatly evidence our theoretical result about $\lambda$ selection: the less, the better.

\subsection{Example 2}\label{subsec:example3}
In this example, we consider a hybrid system with two discrete locations (from~\cite{prajna2004safety}). The discrete transition diagram of the system is shown in \figurename~\ref{fig:example3} and the vector fields describing the continuous behaviors are as follows:
\begin{align*}
&f_1(x)= \begin{bmatrix}x_2 \\ -x_1+ x_3 \\ x_1+(2x_2+3x_3)(1+x_3^2)\end{bmatrix}, f_2(x)= \begin{bmatrix}x_2 \\ -x_1+x_3 \\ -x_1-2x_2-3x_3\end{bmatrix}
\end{align*}
At the beginning, the system is initialized at some point in $X_0 = \{x \in \mathbb{R}^3 | x_1^2 + x_2^2 + x_3^2 \leq 0.01\}$ and then it starts to evolve following the vector fields $f_1(x)$ at location 1(NO CONTROL mode). When the system reaches some point in the guard set $G(1,2)=\{x\in \mathbb{R}^3|0.99 \leq x_1^2 + 0.01x_2^2 + 0.01x_3^3 \leq 1.01\}$, it can jump to location 2 (CONTROL mode) nondeterministically without performing any reset operation (i.e., $R(1,2,x)=G(1,2)$). At location 2, the system will operate following the vector field $f_2(x)$, which means that a controller will take over to prevent $x_1$ from getting too big. As the system enters the guard set $G(2,1) = \{x \in \mathbb{R}^3 | 0.03 \leq x_1^2 + x_2^2 + x_3^2 \leq 0.05\}$, it will jump back to location 1 nondeterministically again without reset operation (i.e., $R(2,1,x)=G(2,1)$). Different from the experiment in~\cite{prajna2004safety}, where the objective is to verify that $|x_1| < 5.0$ in CONTROL mode, our objective is to verify that $x_1$ will stay in a much more restrictive domain in CONTROL mode: $|x_1| < 3.2$.

We define the unsafe set as $\operatorname{Unsafe}(1)=\emptyset$ and $\operatorname{Unsafe}(2)=\{x\in \mathbb{R}^3| 3.2 \leq |x_1| \leq 10\}$, which is sufficient to prove $|x_1|< 3.2$ in CONTROL mode. Similarly, we tried to use both the method in this paper and the method in \cite{prajna2007framework} to compute the barrier certificate. By setting $\lambda_1=\lambda_2=-\frac{1}{5}$ and $\gamma_{12}=\gamma_{21}=1$, our method found a pair of quartic barrier certificate functions: $\phi_1(x)$ and $\phi_2(x)$, whose zero level set is shown in \figurename~\ref{fig:example31_phi1} and \figurename~\ref{fig:example31_phi2} respectively. As you can see, at each location $l=1,2$, the zero level set of $\phi_l(x)$ forms the boundary of the over-approximation $\phi_l(x) \leq 0$ (denoting the points within the pipe) for the reachable set at location $l$. On the one hand, the hybrid system starts from and evolves within the corresponding over-approximation and jumps back and forth between the two over-approximations. On the other hand, the unsafe set does not intersect the over-approximation formed by $\phi_2(x)\leq 0$ (see \figurename~\ref{fig:example31_phi2_uns}). Therefore, the safety of the system is guaranteed. However, using the method in~\cite{prajna2007framework}, we cannot compute the barrier certificate, which means it cannot verify the system.
\begin{figure}[!t]
 \centering
 \subfigure[$\phi_1(x)=0$]{
  \includegraphics[scale=0.45]{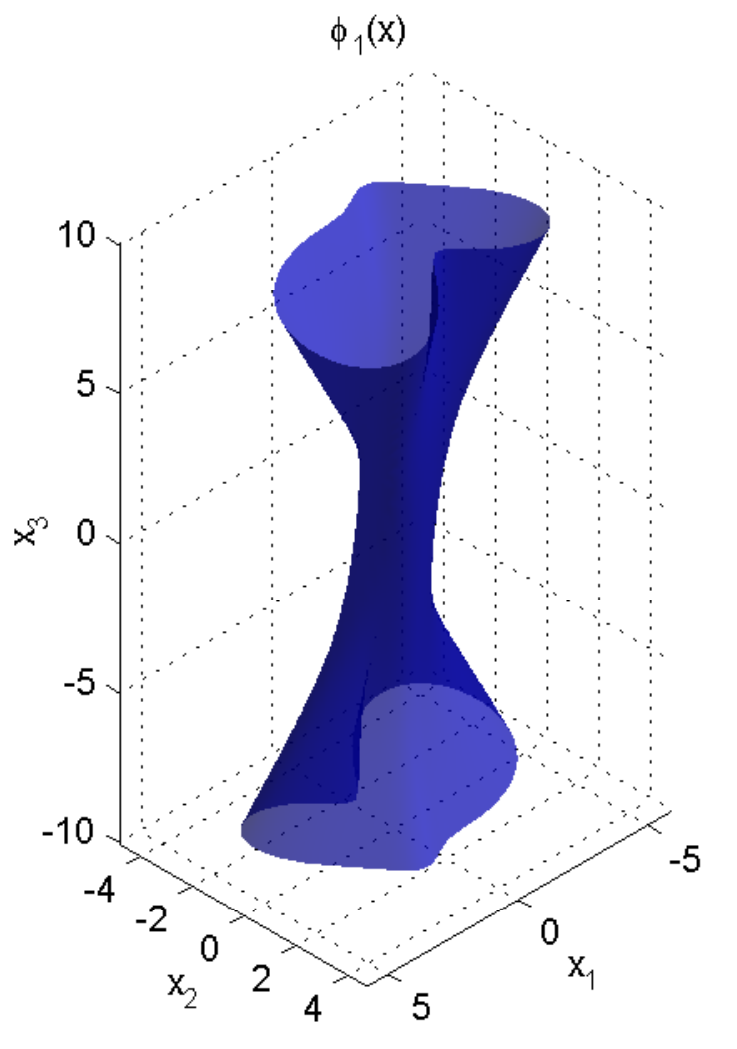}
   \label{fig:example31_phi1}
   }
   \subfigure[$\phi_2(x)=0$]{
  \includegraphics[scale=0.45]{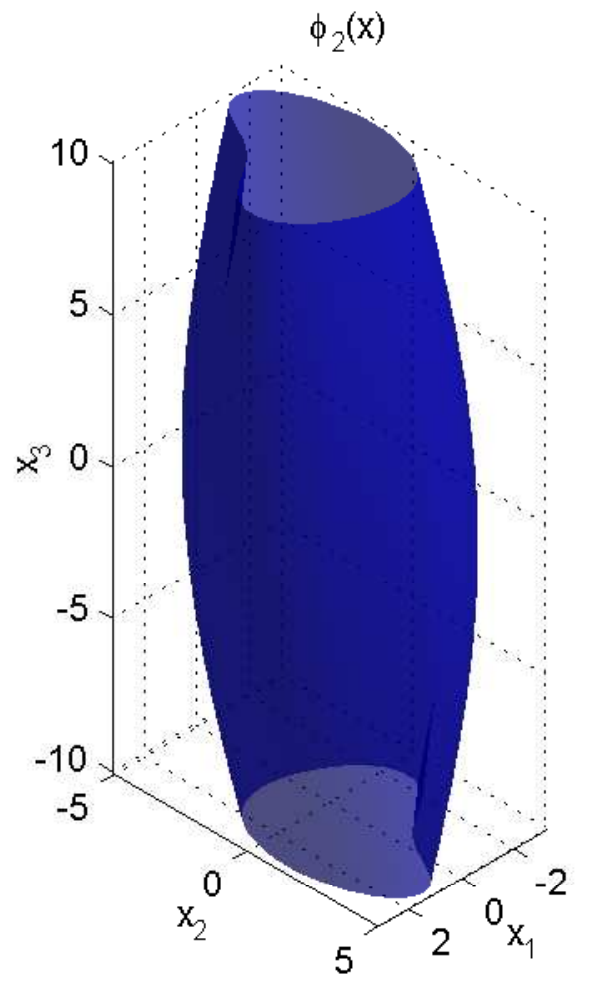}
   \label{fig:example31_phi2}
   }
   \subfigure[$3.2\leq x_1\leq 10$, $\phi_2(x)=0$]{
  \includegraphics[scale=0.45]{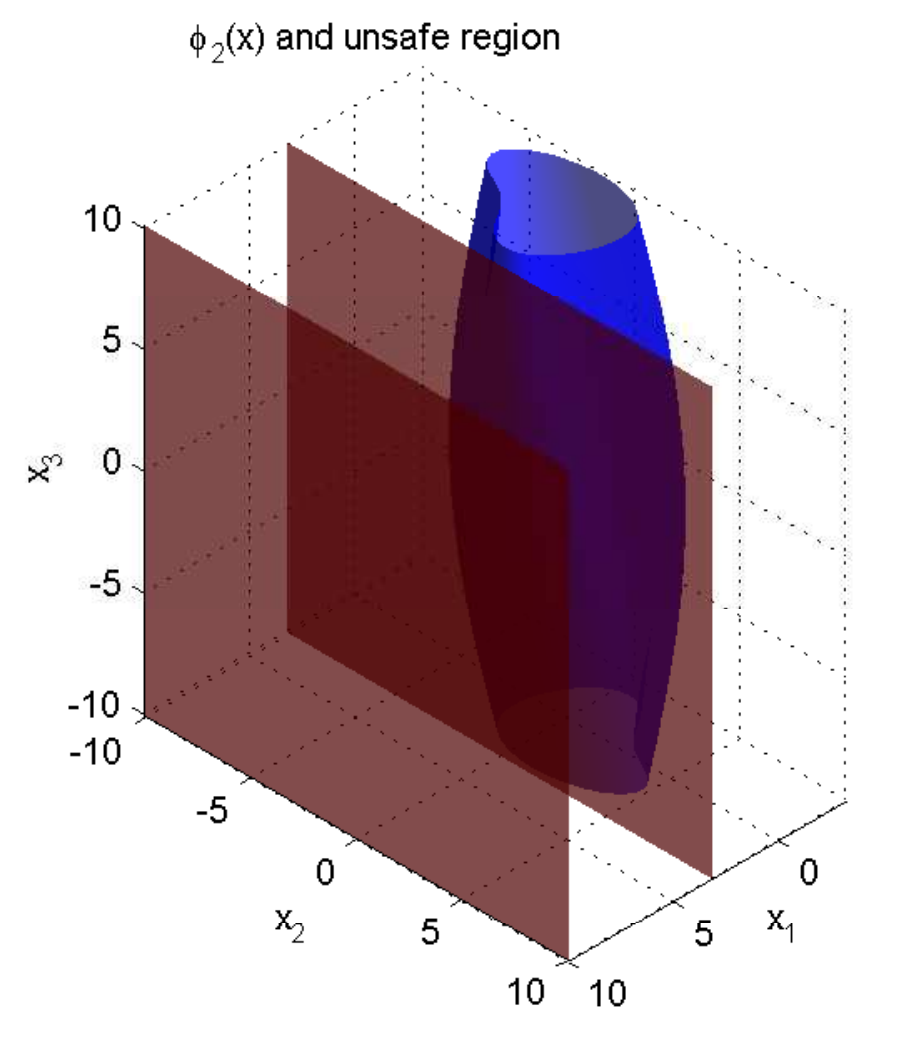}
   \label{fig:example31_phi2_uns}
   }
 \caption{Barrier certificates $\phi_1(x)$ and $\phi_2(x)$ for the hybrid system in Subsection~\ref{subsec:example3}. $\phi_l(x)=0$ ($l=1,2$) forms the boundary of the over-approximation $\phi_l(x)\leq 0$ and separates the inside reachable set from the outside unsafe set (e.g. $3.2 \leq x_1 \leq 10$).}
 \label{fig:barriercert}
\end{figure}

\section{Conclusion}\label{sec:conclusion}
In this paper, we propose a new barrier certificate condition (called \emph{Exponential Condition}) for the safety verification of continuous systems and hybrid systems. Our barrier certificate condition is parameterized by a real number $\lambda$ and the conservativeness of the barrier certificate condition depends closely on the value of $\lambda$: the less value the $\lambda$ is, the less conservative the barrier certificate condition is. Specifically, \emph{Convex Condition} is just the special case of \emph{Exponential Condition} with $\lambda=0$. Therefore, we can obtain the barrier certificate condition that is less conservative than \emph{Convex Condition} as long as we set $\lambda$ to a negative value. The most important benefit of \emph{Exponential Condition} is that it possesses a relatively low conservativeness as well as the convexity and hence can be solved efficiently by semidefinite programming method.

Based on our method, we are able to construct polynomial barrier certificate to verify very critical safety property for semialgebraic continuous systems and hybrid systems. The experiments on a continuous system and a hybrid system show the effectiveness and practicality of our method.


\bibliographystyle{abbrv}


\end{document}